\renewcommand\nomgroup[1]{%
  \item[\bfseries
  \ifstrequal{#1}{A}{Acronyms}{%
  \ifstrequal{#1}{S}{Sets}{%
  \ifstrequal{#1}{P}{Parameters}{%
  \ifstrequal{#1}{V}{Variables}{}}}}%
]}
\newtheorem{thm}{Theorem}
\newtheorem{lem}[thm]{Lemma}
\newtheorem{defn}{Definition}
\newtheorem{exmp}{Example}
\newtheorem{rem}{Remark}
\newcommand{\R}{{\mathbb R}}
\newcommand{\trace}[1]{\text{Tr}\left(#1\right)}
\renewcommand{\arraystretch}{0.9}
\newcommand{\Nset}{\mathbb{N}}
\newcommand{\Pset}{\mathbb{P}}
\newcommand{\Rset}{\mathbb{R}}
\newcommand{\Uset}{\mathbb{U}}
\newcommand{\Xset}{\mathbb{X}}
\newcommand{\Yset}{\mathbb{Y}}
\newcommand{\Zset}{\mathbb{Z}}
\newcommand{\Ncal}{{\mathcal{N}}}
\newcommand{\Scal}{{\mathcal{S}}}
\newcommand{\Tcal}{{\mathcal{T}}}
\newcounter{l1}
\newcounter{l2}
\newcounter{l3}
\newcommand{\bdotlist}{\begin{list}{$\bullet$}{}}
\newcommand{\bboxlist}{\begin{list}{$\Box$}{}}
\newcommand{\bbboxlist}{\begin{list}{\raisebox{.005in}{{\tiny
$\blacksquare$ \ \ }}}{}}
\newcommand{\bdashlist}{\begin{list}{$-$}{} }
\newcommand{\blist}{\begin{list}{}{} }
\newcommand{\barablist}{\begin{list}{\arabic{l1}}{\usecounter{l1}}}
\newcommand{\balphlist}{\begin{list}{(\alph{l2})}{\usecounter{l2}}}
\newcommand{\bAlphlist}{\begin{list}{\Alph{l2}.}{\usecounter{l2}}}
\newcommand{\bdiamlist}{\begin{list}{$\diamond$}{}}
\newcommand{\bromalist}{\begin{list}{(\roman{l3})}{\usecounter{l3}}}
\newcommand{\indi}[1]{\mathds{1}  \hspace{-.03in} \left\{#1 \right\}}
\newcommand{\vol}[1]{\text{Vol}\left(#1\right)}
\begin{document}

\title{An Efficient Method for Quantifying the Aggregate Flexibility of Plug-in Electric Vehicle Populations}

\author{Feras Al Taha, \, Tyrone Vincent, \,   Eilyan Bitar 
\thanks{This work was supported in part by the Natural Sciences and Engineering Research Council of Canada, in part by the Cornell Atkinson Center for Sustainability, in part by The Nature Conservancy, in part by the Holland Sustainability Project Trust, and in part by the Office of Naval Research via grant N00014-17-1-2697.}
\thanks{F. Al Taha is with the School of Electrical and Computer Engineering, Cornell University, Ithaca, NY, 14853, USA (e-mail:  foa6@cornell.edu). }%
\thanks{T. Vincent is with the Department of Electrical Engineering and Computer Science, Colorado School of Mines, Golden, CO 80401, USA (e-mail: tvincent@mines.edu).}%
\thanks{E. Bitar is with the School of Electrical and Computer Engineering, Cornell University, Ithaca, NY, 14853, USA (e-mail: eyb5@cornell.edu). }%
}

\maketitle 

\begin{abstract}

Plug-in electric vehicles (EVs) are widely recognized as being highly flexible electric loads that can be pooled and controlled via aggregators to provide low-cost energy and ancillary services to wholesale electricity markets.  To participate in these markets, an aggregator must encode the aggregate flexibility of the population of EVs under their command as a single polytope that is compliant with existing market rules. To this end, we investigate the problem of characterizing the aggregate flexibility set of a heterogeneous population of EVs whose individual flexibility sets are given as convex polytopes in half-space representation. As the exact computation of the aggregate flexibility set---the Minkowski sum of the individual flexibility sets---is known to be intractable, we study the problem of computing  maximum-volume inner approximations  to the aggregate flexibility set by optimizing over affine transformations of a given convex polytope in half-space representation.  We show how to conservatively approximate these set containment problems as linear programs that scale polynomially with the number and dimension of the individual flexibility sets.  The inner approximation methods provided in this paper generalize and improve upon existing methods from the literature. We illustrate the improvement in approximation accuracy and performance achievable by our methods with numerical experiments.

\end{abstract}

\section{Introduction} \label{sec:introduction}

The widescale electrification of the transportation sector will present both challenges and novel opportunities for the efficient and reliable operation of the power grid. 
In particular, the increase in electricity demand driven by plug-in electric vehicle (EV) charging will  sharply increase  peak demand if left unmanaged \cite{alexeenko2021achieving}.
However, a number of field studies have shown that the charging requirements of EVs are usually flexible in the sense that most EVs charging in workplace or residential settings remain connected to their chargers long after they have finished charging \cite{alexeenko2021achieving, lee2021adaptive, smart2015plugged, bauman2016residential}.   This flexibility can be utilized by coordinating the charging profiles of individual EVs to minimize their collective contribution to peak load, or to provide energy and/or ancillary services to the regional wholesale electricity market \cite{satchidanandan2022economic}.  

Indeed, enabled by regulations such as FERC Order No. 2222 \cite{ferc2222}, aggregators\footnote{An aggregator is an electricity market participant that may combine distributed energy resources across a wide range of types and sizes to participate in the market as a single entity, which is typically called an aggregation.} can pool and coordinate the control of multiple EVs and other distributed energy resources (DERs) to participate alongside conventional resources in the wholesale market \cite{henriquez2017participation}. 
Aggregators that wish to participate in the wholesale market must represent the individual flexibility sets of participating EVs as a single \emph{aggregate flexibility set} that accurately captures the supply/demand capabilities of the individual EVs as a collective. 
Crucially, these aggregate flexibility sets must be encoded using bid/offer formats that are compliant with existing market rules.  
Traditionally, bid/offer formats have been structured to reflect the supply and demand characteristics of conventional generators and load-serving entities. 
More recently, electricity market designs have evolved to incorporate aggregator bid/offer formats that more accurately capture the intertemporal supply and demand capabilities of energy storage resources, e.g., in the form of time-varying upper and lower limits on power, ramping, and battery state-of-charge (SoC) \cite{nyiso, caiso}.\footnote{Examples of such market designs include the New York ISO DER and aggregation participation model \cite{nyiso} and the  California ISO DER provider model \cite{caiso}.}
To participate in wholesale electricity markets, an aggregation of EVs must be offered into the marketplace and settled as a single resource that encapsulates the collective capacity of the EV aggregation to produce and consume energy over a fixed window of time while accounting for the individual charging needs of the participating EVs.
With this motivation in mind, we investigate the problem of designing efficient optimization-based methods to accurately approximate the aggregate flexibility of a finite population of EVs  as a single representative energy storage resource.

\subsection{Related Work}
The individual flexibility sets associated with a wide variety of distributed energy resources, including thermostatically controlled loads and plug-in EVs, are typically encoded as convex polytopes in  half-space representation \cite{zhao2017geometric, hao2014aggregate, barot2017concise}. The exact calculation of their aggregate flexibility set---the Minkowski sum of the individual flexibility sets---is  known to be computationally intractable in general \cite{gritzmann1993minkowski, trangbaek2012exact}. 
As a result, a variety of methods have been developed to efficiently compute approximations that are subsets or supersets of the aggregate flexibility set (termed \textit{inner} and \textit{outer} approximations, respectively) \cite{alizadeh2014capturing, barot2017concise,kundu2018approximating}. 
A shortcoming of outer approximations is that they may contain infeasible points. In contrast, inner approximations are guaranteed to only contain feasible points---a crucial property for control~applications.

There are a number of papers that provide closed-form inner approximations for the aggregate flexibility set as a function of the individual load parameters \cite{hao2014characterizing, hao2014aggregate, madjidian2017energy, nayyar2013aggregate}. While these inner approximations are trivial to compute, they have been observed to be very conservative when there is considerable heterogeneity between the individual flexibility sets \cite{zhao2017geometric}. There have been a number of attempts to utilize convex optimization methods to construct more accurate inner approximations.
For example, Zhao \emph{et al.} \cite{zhao2016extracting} approximate the aggregate flexibility set as a projection of a convex polytope. 

Another group of papers provide methods to approximate the aggregate flexibility set by constructing convex inner approximations of the individual flexibility sets using specific convex geometries that permit the efficient computation of their Minkowski sum. 
For example, M{\"u}ller \emph{et al.}  \cite{muller2017aggregation}  approximate the individual flexibility sets using a specific class of  zonotopes  (a family of centrally symmetric polytopes), while Zhao \emph{et al.} \cite{zhao2017geometric} utilize homothets (dilation and translation) of a user-defined convex polytope.  Nazir \emph{et al.} \cite{nazir2018inner} provide an algorithm to internally approximate the individual flexibility sets using unions of  homothets of axis-aligned hyperrectangles.  
While this algorithm can approximate the true Minkowski sum with arbitrary precision, ensuring high accuracy may require a large number of  hyperrectangles---potentially limiting scalability to large-scale systems. 
Furthermore, the resulting approximation to the aggregate flexibility set may not be compliant with the class of aggregation models mandated by ISO-administered markets, e.g., in the form of a singular energy storage resource.

We also note that there have been recent attempts to construct convex approximations of the aggregate flexibility set when the individual flexibility sets may be nonconvex \cite{hreinsson2021new, taheri2022data}. However, the approximations provided by these methods may contain infeasible points, as they are not provable inner approximations of the aggregate flexibility set.

\subsection{Main Contributions}
In this paper, we study the problem of computing  a \emph{maximum-volume inner approximation} of the aggregate flexibility set by optimizing over affine transformations of a given convex polytope in half-space representation. 
The proposed class of approximations generalizes those considered by related methods from the literature, which either limit the class of approximating sets to homothets of a given convex polytope \cite{zhao2017geometric} or restrict the specification of the given polytope to zonotopic geometries~\cite{muller2017aggregation}.  
Importantly, the optimization methods proposed in this paper can be used to construct an inner approximation of the aggregate flexibility set that is structured as a singular energy storage resource. 
This ensures compliance with electricity market rules that require an aggregation of multiple DERs to be offered into the marketplace as a single
resource that accurately captures the  operating range of the aggregation. 

The approach taken in this paper draws inspiration from the methods proposed in \cite{zhao2017geometric}.
Using standard techniques from convex analysis, we show how to conservatively approximate the maximum-volume inner approximation problem as a linear program that scales polynomially with the number and dimension of the individual flexibility sets. By considering a more general family of approximating polytopes (i.e., affine transformations of convex polytopes), we are able to efficiently compute approximations to the aggregate flexibility set that  improve upon the accuracy of those generated by the methods proposed in \cite{zhao2017geometric}.
We provide a stylized example (in Figure \ref{fig:2d-example}) and conduct numerical experiments (in Section \ref{sec:experiments}) that demonstrate the improvement in approximation accuracy of the proposed methods when compared to the methods proposed in \cite{zhao2017geometric} and \cite{muller2017aggregation}. 
We also show how to efficiently disaggregate any point within the proposed inner approximation of the aggregate flexibility set into a collection of individually feasible charging profiles using an affine mapping that is computed as a byproduct of the inner approximation method.

We note that, while we have focused on plug-in EVs as the motivating application for our analysis, the techniques developed in this paper can also be used to approximate the aggregate flexibility of other distributed energy resources whose individual flexibility sets can be expressed or approximated by convex polytopes in half-space representation. These include thermostatically controlled loads (TCLs) \cite{hao2014aggregate, biegel2013primary,mathieu2013state}, HVAC systems \cite{hao2014hvac}, and residential pool pumps~\cite{meyn2014ancillary}. 

\subsection{Notation} 
 We employ the following notational conventions throughout the paper.
 Let $\Rset$ and $\Zset$ denote the sets of real numbers and  integers, respectively. 
 We denote the indicator function of set $\Scal$ by $\indi{x\in\Scal} = 1$ if $x \in \Scal$ and $\indi{x\in\Scal} = 0$ if $x \notin \Scal$.  We denote the $n \times n$ identity matrix by $I_n$. Given a pair of matrices $A$ and $B$ of appropriate dimension, we let $(A, \, B)$ denote the matrix formed by stacking $A$ and $B$ vertically.  Given a vector $\gamma$ and matrix $\Gamma$ of appropriate dimension, we denote an affine transformation of a  set $\Xset$ by $\gamma+\Gamma\Xset := \{\gamma+\Gamma x \,\mid \, x\in\Xset\}$.
 
\subsection{Paper Organization}
The remainder of the paper is organized as follows. 
In Section \ref{sec:model}, we present the aggregate flexibility model and state the problem addressed in this paper. 
Linear programming-based methods to compute inner approximations to the aggregate flexibility set are derived in  Section \ref{sec:inner approximation results}.  
In Section \ref{sec:disaggregation}, we provide an efficient method to disaggregate charging profiles belonging to the proposed inner approximations of the aggregate flexibility set.
Numerical experiments illustrating the proposed approximation methods are provided in Section \ref{sec:experiments}. 
Section~\ref{sec:conclusion} concludes the paper. 
A list of commonly used abbreviations and symbols is provided in Appendix \ref{app:nomenclature}.

\section{Problem Formulation}

\label{sec:model}
In this section, we present the model of individual EV flexibility sets and formulate the problem of finding maximum volume inner  approximations of the aggregate flexibility set. 
We consider a system in which an aggregator seeks to centrally manage the charging profiles of a finite population of plug-in electric vehicles (EVs) indexed by $i \in \Ncal:=  \{1, \dots, N\}$. Time is assumed to be discrete with periods indexed by $ t \in \Tcal := \{0, \dots, T-1\}$. All time periods are assumed to be of equal length, which we denote by $\delta > 0$.  

\subsection{EV Charging Dynamics}
We let $u_i(t)$ denote the charging rate of EV  $i \in \Ncal$ at time $t \in \Tcal$, and let the vector $u_i := (u_i(0), \, \dots, \, u_i(T-1)) \in \Rset^T$ denote its charging profile.
Given a charging profile $u_i$, the net energy supplied to each  EV  $i \in \Ncal$ is assumed to evolve according to the difference equation
\begin{align} \label{eq:dyn}
x_i(t+1) = x_i(t) + u_i(t) \delta , \quad t\in \Tcal,
\end{align}
where $x_i(0) = 0$ and $x_i(t)$ represents the net  energy delivered to EV $i$ over the previous $t$ time periods.
We denote the resulting net energy profile by the vector $x_i = (x_i(1), \dots, x_i(T)) \in \Rset^{T}$, which satisfies Eq. \eqref{eq:dyn} or, more concisely, the  relationship
\begin{align*}
 x_i = Lu_i,
\end{align*} 
where $ L \in \Rset^{T \times T}$ is a lower triangular matrix given by $L_{ij} := \delta$ for all $j \leq i$. 
The matrix $L$ is invertible since the elements along its diagonal are all non-zero. 
This lossless model of the EV charging dynamics is commonly used in the context of aggregate flexibility modeling, e.g., see \cite{muller2017aggregation,madjidian2017energy, hao2014characterizing, nayyar2013aggregate, zhao2016extracting}.

\subsection{Individual Flexibility Sets}
We refer to the nonempty set of admissible charging profiles associated with each EV $i \in \Ncal$ as an  \textit{individual flexibility set} and denote it by
\begin{align} \label{eq:indi_set}
\Uset_i : = \left\{ u \in \Rset^T \, | \, u \in [\underline{u}_i, \, \overline{u}_i], \ Lu \in [\underline{x}_i, \, \overline{x}_i]   \right\}.
\end{align}
The vectors $\underline{u}_i,  \overline{u}_i \in \Rset^T$ represent minimum and maximum power limits on the charging profile, respectively. The vectors  $\underline{x}_i, \overline{x}_i \in \Rset^T$ represent minimum and maximum energy limits on the net energy profile, respectively.  
Flexibility sets defined according to \eqref{eq:indi_set} are widely studied in the literature and referred to as \emph{virtual batteries} \cite{zhao2017geometric,hreinsson2021new} or \emph{generalized battery models} \cite{hao2014aggregate,muller2017aggregation,barot2017concise,nazir2018inner}.
The family of generalized battery models \eqref{eq:indi_set} is quite expressive. 
For example, it is able to capture time-varying EV power and energy constraints, net energy requirements, charging completion deadlines, and allowable/forbidden charging times that reflect an EV's connection status over time. 
Furthermore, the family of generalized battery models is compatible with the aggregator models currently used in wholesale electricity markets \cite{caiso,nyiso}, e.g., in the form of time-varying upper and lower limits on power and battery state-of-charge (SoC).

Note that each flexibility set $\Uset_i$ is a compact, convex polytope since it is defined as the intersection of $4T$ closed half-spaces that form a bounded set. 
It will be convenient to use a more concise expression for the individual flexibility sets given by
\begin{align} \label{eq:power_rep}
\Uset_i  = \left\{ u \in \Rset^T \, | \,   H u \leq h_i   \right\},
\end{align}
where $H := (L, -L, \, I_T, -I_T)$ and $h_i := (\overline{x}_i, -\underline{x}_i,  \overline{u}_i, -\underline{u}_i )$. 
The representation of a polytope as the intersection of half-spaces is commonly referred to as a half-space representation (H-representation) of the polytope. 
We will occasionally refer to polytopes in H-representation  as H-polytopes.

\begin{figure}[ht]
    \centering
     \subfloat[Net energy profiles and constraints]{\includegraphics[width=\columnwidth,trim= 0 6.3cm 0 0, clip]{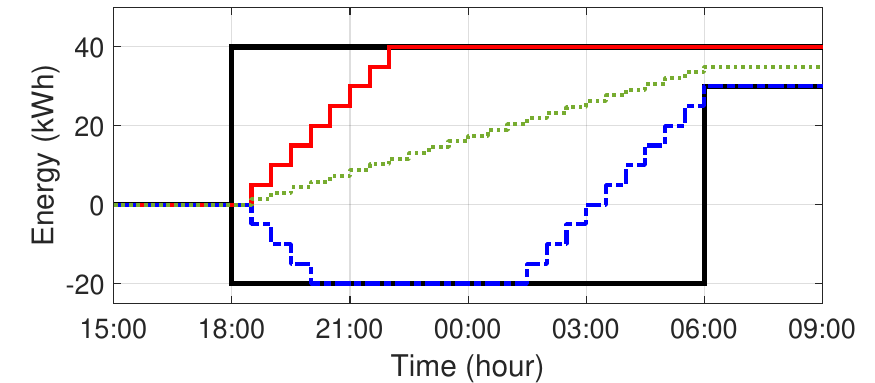}}\\
    \subfloat[Power profiles and constraints]{\includegraphics[width=\columnwidth,trim= 0 .3cm 0 6.3cm, clip]{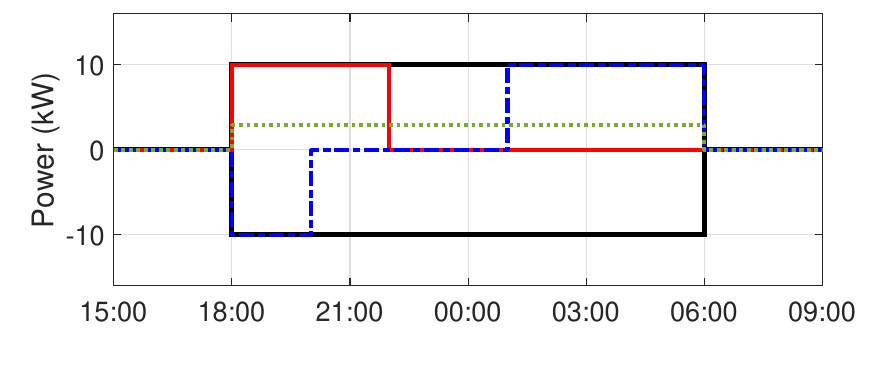}}
    \caption{Example of an individual EV flexibility set. The power and net energy profile constraints are depicted as solid black lines. Three different feasible power profiles and net-energy profiles are depicted. 
    In this example, we take $\delta=1/2$ hour, $T=24$, and associate the initial period $t=0$ with the 6:00-6:30 PM time interval. The EV charging parameters used in this example are: $a_i=0$  (6:00 PM arrival), $d_i=23$  (6:00 AM departure), $u_i^{\rm max}=-u_i^{\rm min} = 10$ kW, $x_i^{\rm max} = 60$ kWh, $x_i^{\rm init} = 20$ kWh, and $x_i^{\rm fin} = 50$ kWh.}
    \label{fig:single_EV}
\end{figure} 
\begin{exmp}[Constructing EV flexibility sets] \label{ex:ev_req} \rm  
    In this example, we show how to construct  individual flexibility sets from EV charging models used by real-world EV smart charging systems \cite{alexeenko2021achieving, lee2021adaptive}.
    Consider an EV $i \in \Ncal$ that is available to charge for a contiguous set of time periods between a plug-in time $a_i \in \Tcal$ and charging completion deadline $d_i \in \Tcal$.
    For all time periods $t \in \{a_i, \dots, d_i\}$,  EV $i$ can be charged or discharged at any power level between  given minimum and maximum rates, $u_i^{\rm min} \in \Rset$ and $u_i^{\rm max} \in \Rset$, respectively, where $ u_i^{\rm min} < u_i^{\rm max}$.  
    For all time periods  $t \notin \{a_i, \dots, d_i\}$, the charging rate is required to be zero. 
    Together, these charging profile constraints can be encoded as a generalized battery model \eqref{eq:indi_set} by specifying the charging profile limits $(\overline{u}_i, \, \underline{u}_i)$ according~to
    \begin{align}
    \label{eq:u_upper_ex} \overline{u}_i(t) & = u_i^{\rm max} \cdot \indi{ a_i \leq t \leq d_i}\\
    \label{eq:u_lower_ex} \underline{u}_i(t) & = u_i^{\rm min} \cdot \indi{ a_i \leq t \leq d_i}
    \end{align}
    for $t = 0, \dots, T-1$.  
    
    Additionally, EV $i$ is assumed to have a limited energy storage capacity $x_i^{\rm max} \in \Rset_+$, an initial state of charge $x_i^{\rm init} \in \Rset_+$ when plugging in to charge, and a desired final state of charge $x_i^{\rm fin} \in \Rset_+$ that must be satisfied by its charging completion deadline. 
    These requirements can also be expressed as constraints that are consistent with the family of generalized battery models \eqref{eq:indi_set} by specifying the net energy profile limits  $(\overline{x}_i, \, \underline{x}_i)$ according to
    \begin{align}
    \label{eq:x_upper_ex} \overline{x}_i(t)  & =  (x_i^{\rm max} -   x_i^{\rm init}) \cdot\indi{ t \geq a_i}\\
    \label{eq:x_lower_ex} \underline{x}_i(t) & =   x_i^{\rm fin} \cdot \indi{t > d_i} \  -  \ x_i^{\rm init} \cdot \indi{t \geq a_i}  
    \end{align}
    for $t = 1,  \dots, T$. 
    Collectively, Eqs. \eqref{eq:u_upper_ex}-\eqref{eq:x_lower_ex} completely define the individual flexibility set of EV $i$ as a generalized battery model.
    
    We provide an example of an individual EV flexibility set in Fig. \ref{fig:single_EV} using the power and energy profile limits specified in Eqs. \eqref{eq:u_upper_ex}-\eqref{eq:x_lower_ex}. 
    We remind the reader that the energy limits are bounds on the \emph{net energy} delivered to the EV. 
    For example, in Fig. \ref{fig:single_EV}, the EV has a battery capacity of $x_i^{\rm max} = 60 \text{ kWh}$ and an initial state-of-charge of $x_i^{\rm init} = 20 \text{ kWh}$. 
    As a result, the minimum and maximum net energy that can be supplied to the EV are $-20$ kWh and  $40$ kWh, respectively, as depicted in Fig. \ref{fig:single_EV}(a). 
\end{exmp}

\begin{rem}[Net-energy representation] \rm \label{rem:sparse}
An individual flexibility set can be equivalently represented in terms of the corresponding set of net-energy profiles, given by
\begin{align*}
    \Xset_i := L \Uset_i = \{ x \in\Rset^T \,|\, HL^{-1} x \le h_i \}.
\end{align*}
This alternative representation of the individual flexibility set may be advantageous from a computational perspective, as the left-hand side matrix $H L^{-1}$ is much sparser than the left-hand side matrix $H$ used in the power profile-based  representation given in \eqref{eq:power_rep}.
\end{rem}

\begin{rem}[Lossy charging dynamics] \rm \label{rem:nonideal}
The EV charging model \eqref{eq:dyn}   assumes lossless storage dynamics to ensure convexity of the individual flexibility sets. A lossy EV charging model with  time-varying energy leakage  and  conversion inefficiencies can be expressed as:
\begin{align*}
    x_i(t+1) = \zeta_i(t)  x_i(t) +  \delta\left(  \eta_i^{\rm in}(t) u_i(t)^+ +  \frac{1}{\eta_i^{\rm out}(t)} u_i(t)^-     \right), 
\end{align*}
where $u_i(t)^+ := \max(0, u_i(t))$  and $u_i(t)^- := \min(0, u_i(t))$,   and the  scalar coefficients $\zeta_i(t)$, $\eta_i^{\rm in}(t)$,  and $\eta_i^{\rm out}(t)$ are assumed to lie in the interval $(0, 1]$ for all $t \in \Tcal$. The above model reduces to the lossless storage model \eqref{eq:dyn} when  $\zeta_i(t) = \eta_i^{\rm in}(t) = \eta_i^{\rm out}(t) = 1$ for all $t \in \Tcal$.  The results  provided in this paper can be directly applied to the above class of lossy storage models for two important special cases: (i)  lossy storage dynamics with energy leakage (i.e., $\zeta_i(t) < 1 $) and no energy conversion inefficiencies (i.e., $\eta_i^{\rm in}(t) = \eta_i^{\rm out}(t) = 1$), and (ii) lossy storage dynamics with a one-way charging requirement, i.e., $u_i(t) \geq 0$ for all time periods $t \in \mathcal{T}$. 
In both cases, the resulting individual flexibility sets will be  compact, convex  polytopes in H-representation, enabling a direct application of the aggregation and disaggregation techniques developed in this paper. 
The treatment of EVs with general lossy storage dynamics is more challenging, because this results in \emph{nonconvex} individual flexibility sets. 
The extension of the techniques developed in this paper to account for general lossy charging dynamics  is left as a direction for future research.
\end{rem}

\subsection{Aggregate Flexibility Set}
The \emph{aggregate flexibility set} associated with a finite population of EVs can be expressed as a Minkowski sum of the individual  flexibility sets given by
\begin{align} \label{eq:agg_set}
\Uset := \sum_{i \in \Ncal} \Uset_i = \bigg \{ u \in \Rset^T \, \Big \vert \, u=\sum_{i\in\Ncal} u_i, \   u_i \in \Uset_i \bigg \}. 
\end{align}
Without loss of generality, we assume throughout the paper that the aggregate flexibility set  $\Uset$ is a full-dimensional polytope in $\Rset^T$. 

Note that it is NP-hard to compute the Minkowski sum of two H-polytopes \cite{tiwary2008hardness}. 
And while it is easy to compute the Minkowski sum of two convex polytopes in vertex representation  (V-representation), all known classes of algorithms that convert a polytope from H-representation to V-representation (vertex enumeration) and vice-versa (facet enumeration) exhibit worst-case complexities that are exponential in the polytope's number of dimensions. 
Since EV flexibility sets are typically provided as H-polytopes, calculating their aggregate flexibility set exactly is therefore computationally intractable in general.

\subsection{Approximating the Aggregate Flexibility Set}

Recognizing these challenges, our main objective in this paper is to devise computationally efficient methods to compute polyhedral inner approximations (i.e., subsets)  of the aggregate flexibility set $\Uset$. 
More precisely, we seek a polytope $\Pset$ that satisfies
\begin{align*}
\Pset \subseteq  \Uset.
\end{align*}
To facilitate the efficient calculation of inner  approximations of the aggregate flexibility set, we restrict our attention to approximating polytopes that are affine transformations of a given H-polytope  $\Uset_0 \subseteq \Rset^T$, i.e.,
\begin{align} \label{eq:trans}
\Pset = \overline{p} + P \Uset_0,
\end{align}
where  $\overline{p} \in \Rset^T$ and $P \in \Rset^{T \times T}$. 
Employing the same nomenclature as in \cite{sadraddini2019linear}, we refer to affine transformations of H-polytopes as AH-polytopes.  We will refer to the H-polytope $\Uset_0$ as the \emph{base set}, which is assumed to be fixed throughout the paper. 

Given  heterogeneous individual flexibility sets $\Uset_1,\dots,\Uset_N$, we are interested in computing a \emph{maximum-volume} AH-polytope  $\Pset = \overline{p} + P \Uset_0$ that is contained within the aggregate flexibility set $\Uset$ by solving the following optimal polytope containment problem
\begin{align} \label{eq:inner approx}
\text{maximize} \ \,  \vol{\Pset}  \ \, \text{subject to}  \  \  \Pset = \overline{p} + P \Uset_0 \subseteq \Uset,
\end{align}
with respect to the  optimization variables   $\overline{p} \in \Rset^T$ and $P \in \Rset^{T \times T}$. Here, $\vol{\cdot}$ denotes the $T$-dimensional volume function (a generalization of the usual volume measure in three dimensions to higher dimensions).  
We refer to feasible solutions to problem~\eqref{eq:inner approx}  as \emph{inner approximations}  of the aggregate flexibility set. 

The optimization problem \eqref{eq:inner approx} is challenging to solve for a variety of reasons. 
First, it is computationally intractable to exactly calculate the volume of a polytope in high dimensions \cite{dyer1988complexity}. 
Second, the polytope containment condition in problem \eqref{eq:inner approx} is also computationally intractable to verify in general \cite{sadraddini2019linear, tiwary2008hardness}.
Thus, instead of attempting to compute optimal solutions to problem \eqref{eq:inner approx},  we pursue a slightly less ambitious goal in this paper by seeking to \emph{conservatively approximate} problem \eqref{eq:inner approx} by a tractable convex program. 
The convex programs that we construct in Section \ref{sec:inner approximation results} are modestly-sized linear programs, which are guaranteed to generate valid inner  approximations of the aggregate flexibility set. 

\subsection{Choosing the Base Set}

The methods proposed in this paper rely on the a priori determination of a base set $\Uset_0$. 
Inspired by the approach taken in \cite{zhao2017geometric}, we restrict our attention to base sets of the~form
\begin{align} \label{eq:base set definition}
\Uset_0 := \left\{ u \in \Rset^T \, | \,   H u \leq h_0   \right\},
\end{align}
where the right-hand side vector $h_0 := (1/N) \sum_{i \in \Ncal} h_i$ is an average of the individual flexibility set parameters. 
This specific choice of  base set is intended to  approximate the collection of (potentially heterogeneous) individual flexibility sets in a balanced manner.

It was previously shown in  \cite[Proposition 1]{barot2017concise}  that a \emph{dilation} of this base set by a factor of $N$ results  in an outer approximation of the aggregate flexibility set, i.e., 
\begin{align} \label{eq:trivial_outer_approx}
    \Uset \subseteq N \Uset_0.
\end{align}
If the individual flexibility sets are identical, then this particular dilation of the given base set results in an exact expression for  the aggregate flexibility set, i.e., $\Uset = N \Uset_0$.  

It is important to note that the base set, being defined this way, belongs to the family of \emph{generalized battery models} defined in \eqref{eq:indi_set}. In certain applications, it may be necessary to restrict the family of allowable transformations in \eqref{eq:trans}  to those which are \emph{structure preserving}---i.e., transformations resulting in polytopes that are also generalized battery models. 
For example, independent system operators (ISOs) that manage wholesale electricity markets do not have the visibility or means  to effectively optimize the operation of individual EVs within a large aggregation. As a result, 
current market rules require  aggregators participating in wholesale markets to represent the collective capability of the resources under their control as a single representative resource  that can be dispatched by the ISO \cite{nyiso, caiso}.  
This leads us to the following definition of structure-preserving transformations.

\begin{defn}[Structure-preserving transformations] \label{def:struct} \rm An affine transformation $\Pset = \overline{p} + P\Uset_0$ is said to be \emph{structure preserving} if the resulting polytope can be expressed in H-representation as $\Pset = \{ u \in \Rset^T \, | \,  Hu \leq h'\}$ for some $h'\in \Rset^{4T}$.
\end{defn}
Importantly, a structure-preserving transformation of the proposed base set \eqref{eq:base set definition} produces a set that is within the generalized battery model class \eqref{eq:indi_set}, which is compatible with the resource aggregation representations currently used in wholesale electricity markets. 
In Section \ref{sec:structure preserving innner}, we provide a linear programming-based approach to compute a structure-preserving transformation of the base set that is guaranteed to be a subset of the aggregate flexibility set.

Also, note that a transformation given by a translation and dilation of the base set is structure preserving. To see this, let  $\Pset = \overline{p} + \alpha \Uset_0$, where  $\alpha >0 $. Under this transformation, it holds  that  $\Pset = \{ u \in \Rset^T \, | \,  Hu \leq \alpha h_0 + H\overline{p} \}.$
It should also be noted that while we have adopted a specific choice of base set in Eq. \eqref{eq:base set definition}, all of the following results provided in this paper hold for any choice of base set that is an H-polytope. 
 
\section{Inner Approximation Methods} \label{sec:inner approximation results}
\begin{figure*}[htb!]
    \centering 
    \includegraphics[width=0.9\linewidth]{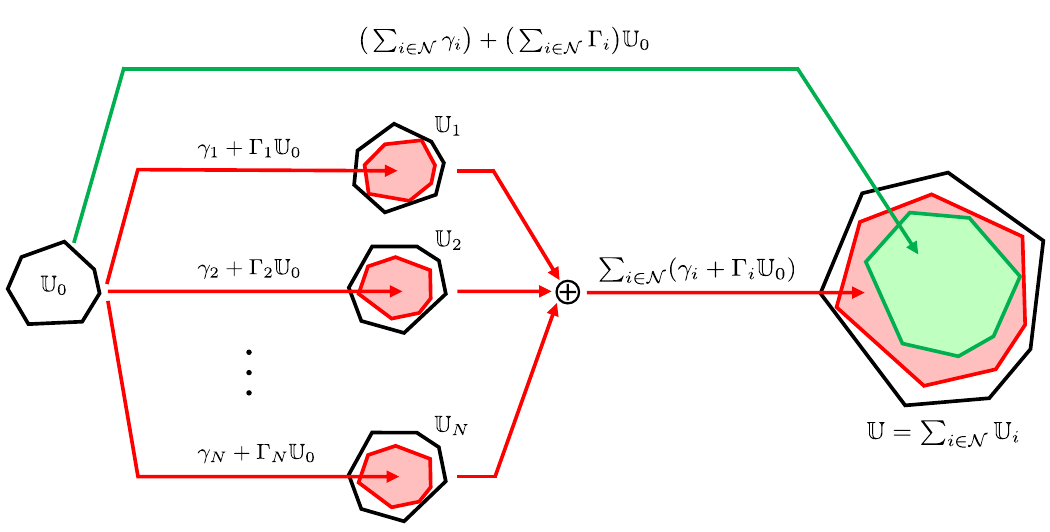}
    \caption{Illustration of the inner approximation method proposed in this paper. Depicted are: the base set $\Uset_0$; the affine transformations $\gamma_i+\Gamma_i\Uset_0$ (red) that inner approximate the individual flexibility sets $\Uset_i$  for $i=1,\dots,N$; and the corresponding affine transformation $(\sum_{i\in\Ncal}\gamma_i)+(\sum_{i\in\Ncal}\Gamma_i)\Uset_0$ (green) that inner approximates the aggregate flexibility~set~$\Uset$.}
    \label{fig:inner_approx}
\end{figure*}

In this section, we derive a conservative linear programming (LP) approximation of the optimal inner polytope containment problem~\eqref{eq:inner approx}.
We start by constructing an inner approximation to each individual flexibility set given by: 
\begin{align} \label{eq:ind_con}
    \gamma_i+\Gamma_i\Uset_0 \subseteq \Uset_i, \quad i = 1, \dots, N.
\end{align}
Here, $\gamma_i\in\Rset^T$ and $\Gamma_i\in\Rset^{T \times T}$ ($i = 1, \dots, N$)  are optimization variables that will be selected to ensure that each AH-polytope $\gamma_i  + \Gamma_i \Uset_0$ closely approximates its corresponding flexibility set $\Uset_i$, while satisfying the individual set containment conditions \eqref{eq:ind_con}, as depicted by the red arrows in the left-hand side of Fig. \ref{fig:inner_approx}. It follows from  \eqref{eq:ind_con} that the Minkowski sum of the resulting AH-polytopes  is  an inner approximation of  the aggregate flexibility set, i.e., 
\begin{align} \label{eq:mink_sum}
    \sum_{i \in \Ncal} \gamma_i+\Gamma_i\Uset_0  \subseteq \sum_{i\in \Ncal} \Uset_i,
\end{align}
as illustrated by the red arrow in the right-hand side of Fig.~\ref{fig:inner_approx}.
However, the Minkowski sum  $\sum_{i \in \Ncal} \gamma_i+\Gamma_i\Uset_0$ is still intractable to compute. We avoid this difficulty by summing the elements of the individual transformations to create the mapping depicted by the green arrow in the upper half of Fig.  \ref{fig:inner_approx}. This provides an inner approximation of the aggregate flexibility set  due to the following property\footnote{To see why the inclusion \eqref{eq:sum_ah} is true, note that any  element  $u \in \big(\sum_{i\in\Ncal} \gamma_i\big) + \big(\sum_{i\in\Ncal} \Gamma_i\big) \Uset_0 $ can be expressed as $u = \sum_{i \in \Ncal} (\gamma_i + \Gamma_i u_0)$ for some $u_0 \in \Uset_0$. Since the element  $\gamma_i + \Gamma_i u_0$  belongs to the set $\gamma_i + \Gamma_i \Uset_0$ for  each $i \in \Ncal$, it follows that $u \in \sum_{i \in \Ncal}  (\gamma_i + \Gamma_i \Uset_0$). }
\begin{align}\label{eq:sum_ah}
    \Big(\sum_{i\in\Ncal} \gamma_i\Big) + \Big(\sum_{i\in\Ncal} \Gamma_i\Big) \Uset_0 \subseteq  \sum_{i\in\Ncal} \gamma_i + \Gamma_i \Uset_0. 
\end{align}
Setting $\overline{p} =\sum_{i\in\Ncal} \gamma_i$ and $P=\sum_{i\in\Ncal} \Gamma_i$ yields an inner approximation to the aggregate flexibility set given by  $\Pset = \overline{p} + P \Uset_0 \subseteq \Uset$, which follows from  inclusions \eqref{eq:mink_sum} and \eqref{eq:sum_ah}.

As a key building block in the construction of a convex approximation to problem \eqref{eq:inner approx}, we provide a set of linear constraints that are necessary and sufficient  for the containment of an AH-polytope within an H-polytope. 

\begin{lem}[AH-polytope in H-polytope] \label{lem:ah-polytope containment} \rm Let $\Xset = \{ x \in \Rset^{n_x} \, | \, H_x x \leq h_x\}$ and $\Yset = \{ y \in \Rset^{n_y} \, | \, H_y y \leq h_y\}$,   where $H_x \in \Rset^{m_x \times n_x}$, $H_y \in \Rset^{m_y \times n_y}$, and $\Xset$ is assumed to be nonempty. Given a vector $\gamma \in \Rset^{n_y} $ and matrix $\Gamma \in \Rset^{n_y \times n_x}$, it holds that $\gamma + \Gamma \Xset \subseteq \Yset$  if and only if there exists a matrix $\Lambda \in \Rset^{m_y \times m_x}$ such that
\begin{align}
\label{eq:inner cond 1} & \Lambda  \geq 0,  \\
\label{eq:inner cond 2}  &\Lambda  H_x = H_y\Gamma ,    \\
 \label{eq:inner cond 3}   &\Lambda h_x  \leq h_y -  H_y \gamma.
\end{align}
\end{lem}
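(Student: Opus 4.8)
The statement is a Farkas-type (linear programming duality) characterization of set containment, so the natural approach is to reduce the containment $\gamma + \Gamma\Xset \subseteq \Yset$ to a family of linear feasibility problems and dualize. The plan is as follows. First, unwind the definitions: $\gamma + \Gamma\Xset \subseteq \Yset$ holds if and only if every point $y = \gamma + \Gamma x$ with $H_x x \leq h_x$ satisfies $H_y y \leq h_y$; equivalently, examining this row by row, for each $j \in [1\,..\,m_y]$ we need $H_{y,j}^\top(\gamma + \Gamma x) \leq h_{y,j}$ for all $x$ with $H_x x \leq h_x$, where $H_{y,j}^\top$ denotes the $j$-th row of $H_y$. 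This is precisely the statement that the optimal value of the linear program $\max_x \{ H_{y,j}^\top \Gamma x \mid H_x x \leq h_x\}$ is at most $h_{y,j} - H_{y,j}^\top\gamma$.

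Second, apply LP duality to each of these $m_y$ linear programs. Here I would invoke that $\Xset$ is nonempty (so the primal is feasible) and note that if the primal were unbounded for some $j$, then containment fails trivially (the image would be unbounded in that direction while $\Yset$ is cut by a finite half-space) — so we may assume boundedness and strong duality applies with no duality gap. The dual of $\max_x\{ c^\top x \mid H_x x \leq h_x\}$ is $\min_{\lambda \geq 0}\{ \lambda^\top h_x \mid H_x^\top \lambda = c\}$. Taking $c = \Gamma^\top H_{y,j}$, the containment-per-row condition becomes: there exists $\lambda_j \geq 0$ with $H_x^\top \lambda_j = \Gamma^\top H_{y,j}$ and $\lambda_j^\top h_x \leq h_{y,j} - H_{y,j}^\top \gamma$. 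Stacking the row vectors $\lambda_j^\top$ as the rows of a matrix $\Lambda \in \Rset^{m_y \times m_x}$ converts these $m_y$ conditions into exactly the matrix conditions \eqref{eq:inner cond 1}--\eqref{eq:inner cond 3}: $\Lambda \geq 0$ is the sign constraint on every row, $\Lambda H_x = H_y \Gamma$ is the stack of the equality constraints $\lambda_j^\top H_x = H_{y,j}^\top \Gamma$, and $\Lambda h_x \leq h_y - H_y\gamma$ is the stack of the scalar inequalities.

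For the converse direction (sufficiency), which is the easier half, I would argue directly without duality: given $\Lambda \geq 0$ satisfying \eqref{eq:inner cond 2}--\eqref{eq:inner cond 3}, take any $x \in \Xset$, so $H_x x \leq h_x$; then $H_y(\gamma + \Gamma x) = H_y\gamma + \Lambda H_x x \leq H_y\gamma + \Lambda h_x \leq h_y$, where the first inequality uses $\Lambda \geq 0$ together with $H_x x \leq h_x$ (componentwise, preserved under multiplication by a nonnegative matrix), and the second is \eqref{eq:inner cond 3}. Hence $\gamma + \Gamma x \in \Yset$, proving containment. The main subtlety to handle carefully is the unboundedness case in the necessity direction: I expect the cleanest treatment is to observe that if the $j$-th LP is unbounded above, then there is a recession direction $d$ with $H_x d \leq 0$ and $H_{y,j}^\top\Gamma d > 0$, and since $\Xset \neq \emptyset$ the ray $x_0 + t d \subseteq \Xset$ forces $H_{y,j}^\top(\gamma + \Gamma(x_0 + td)) \to \infty$, contradicting $\gamma + \Gamma\Xset \subseteq \Yset$; therefore each LP is bounded and strong duality legitimately yields the $\lambda_j$. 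This is the only place where nontrivial care is needed; everything else is bookkeeping in assembling the rows into $\Lambda$.
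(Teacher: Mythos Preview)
Your proposal is correct and follows essentially the same route as the paper: reduce the containment $\gamma+\Gamma\Xset\subseteq\Yset$ to the per-row half-space conditions $\sup_{x\in\Xset} H_{y,j}^\top(\gamma+\Gamma x)\le h_{y,j}$, invoke LP strong duality (using nonemptiness of $\Xset$) to obtain dual certificates $\lambda_j\ge 0$ with $H_x^\top\lambda_j=\Gamma^\top H_{y,j}$ and $\lambda_j^\top h_x\le h_{y,j}-H_{y,j}^\top\gamma$, and stack the $\lambda_j^\top$ as the rows of $\Lambda$. The only differences are cosmetic: you supply a separate direct verification of the sufficiency direction and spell out the unboundedness case explicitly, whereas the paper handles both directions simultaneously through the chain of equivalences and absorbs the unbounded case into the statement of strong duality.
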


Lemma \ref{lem:ah-polytope containment} is a known result in the literature \cite{kellner2015containment, rakovic2007optimized, sadraddini2019linear}. 
It follows from standard duality results in convex analysis, and can be interpreted as a variant of Farkas' Lemma.  
To keep the paper self contained, we include a simple proof that uses the strong duality property of linear programs.

\begin{proof}  First, notice that the set inclusion $\gamma + \Gamma \Xset \subseteq \Yset$  holds if and only if the AH-polytope  $ \gamma + \Gamma \Xset$ is contained in each half-space defining the H-polytope $\Yset$, i.e.,
\begin{align} \label{eq:proof halfspace}
 \underset{x \in \Xset}{\sup}\ H_{y,j} ^\top (\gamma + \Gamma x) \leq h_{y,j}, \quad  j = 1, \dots, m_y, 
\end{align}
where $H_{y,j} ^\top$ denotes the $j$-th row of $H_y$  and $h_{y,j}$ denotes the $j$-th element of $h_y$.  For $j = 1, \dots, m_y$, \eqref{eq:proof halfspace} is equivalent to
\begin{align*}
 & h_{y,j}  - H_{y,j}^\top \gamma \geq    \underset{x \in \Rset^{n_x} }{\sup}  \left\{ H_{y,j} ^\top \Gamma x \, | \, H_x x \leq h_x    \right\}, \\
\Leftrightarrow \, & h_{y,j}  - H_{y,j}^\top \gamma \geq    \underset{\lambda_j \in \Rset^{m_x}_+ }{\inf} \left\{  \lambda_j^\top h_x  \, | \,   H_x^\top \lambda_j = \Gamma^\top H_{y,j} \right\}, \\
\Leftrightarrow  \, &    \begin{aligned}\exists \, \lambda_j \in \Rset^{m_x}_+ \text{ s.t. } & \lambda_j^\top h_x \leq  h_{y,j}  - H_{y,j}^\top \gamma, \,   H_x^\top \lambda_j = \Gamma^\top H_{y,j}.\end{aligned}
\end{align*}
The equivalence in the second line follows from the strong duality of linear programs, as the primal problem 
has  a nonempty feasible set. By defining $\Lambda := (\lambda_1^\top, \dots, \lambda_{m_y}^\top)$, the conditions in the third line can be shown to be equivalent to conditions \eqref{eq:inner cond 1}, \eqref{eq:inner cond 2}, and  \eqref{eq:inner cond 3}, proving the desired result.
\end{proof}

Lemma \ref{lem:ah-polytope containment} can be used to linearly encode the individual set containment conditions \eqref{eq:ind_con}. 
Using this linear reformulation 
in combination with property \eqref{eq:sum_ah}, we derive a 
set of sufficient conditions  for the set containment constraint $\overline{p} + P \Uset_0 \subseteq \Uset$.

\begingroup
\allowdisplaybreaks
\begin{thm}[AH-polytope in Sum of H-polytopes]  \label{thm:main inner containment} \rm It holds that $\overline{p} + P \Uset_0 \subseteq \Uset$ if there exist  $\gamma_i \in \Rset^{T}$,  $\Gamma_i \in \Rset^{T \times T}$, and $\Lambda_i \in \Rset^{4T \times 4T}$ for $i=1,\dots, N$ such that 
\begin{align}
 \label{eq:suff inner 1} &\hspace{0in} [\, \overline{p}, \, P \,] = \sum\nolimits_{i=1}^N  [\,\gamma_i, \, \Gamma_i \,], \\
 \label{eq:suff inner 2} & \Lambda_i  \geq 0, \quad i = 1, \dots, N, \\
 \label{eq:suff inner 3}  &\Lambda_i  H = H \Gamma_i,   \quad i = 1,\dots, N,  \\
  \label{eq:suff inner 4}   &\Lambda_i h_0 \leq h_i  - H \gamma_i, \quad i = 1,\dots, N.
\end{align}
\end{thm}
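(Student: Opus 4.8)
The plan is to assemble the claim from three ingredients already available in the excerpt: Lemma~\ref{lem:ah-polytope containment} applied to each individual flexibility set, the monotonicity of the Minkowski sum under set inclusion (inclusion \eqref{eq:mink_sum}), and the linearization inclusion \eqref{eq:sum_ah}. Concretely, I would first invoke Lemma~\ref{lem:ah-polytope containment} with the substitutions $\Xset = \Uset_0$ (so that $H_x = H$, $h_x = h_0$, $m_x = 4T$, $n_x = T$) and $\Yset = \Uset_i$ (so that $H_y = H$, $h_y = h_i$, $m_y = 4T$, $n_y = T$). Under these substitutions, conditions \eqref{eq:suff inner 2}, \eqref{eq:suff inner 3}, and \eqref{eq:suff inner 4} are, for each fixed $i$, exactly conditions \eqref{eq:inner cond 1}, \eqref{eq:inner cond 2}, and \eqref{eq:inner cond 3} of the lemma with $\Lambda = \Lambda_i$, $\gamma = \gamma_i$, and $\Gamma = \Gamma_i$. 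The lemma's hypothesis that $\Xset$ be nonempty is met because $\Uset_0$ is nonempty: choosing any $u_i \in \Uset_i$ (the individual sets are assumed nonempty) and averaging gives $H\big(\tfrac{1}{N}\sum_{i\in\Ncal} u_i\big) = \tfrac{1}{N}\sum_{i\in\Ncal} H u_i \le \tfrac{1}{N}\sum_{i\in\Ncal} h_i = h_0$, so $\tfrac{1}{N}\sum_{i\in\Ncal} u_i \in \Uset_0$. Hence Lemma~\ref{lem:ah-polytope containment} yields the individual containments $\gamma_i + \Gamma_i\Uset_0 \subseteq \Uset_i$ for every $i = 1,\dots,N$.

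Next I would propagate these containments through the Minkowski sum. Since the Minkowski sum is monotone with respect to set inclusion, summing the individual inclusions over $i \in \Ncal$ gives inclusion \eqref{eq:mink_sum}, namely $\sum_{i\in\Ncal}(\gamma_i + \Gamma_i\Uset_0) \subseteq \sum_{i\in\Ncal}\Uset_i = \Uset$. I would then chain in the linearization inclusion \eqref{eq:sum_ah}, i.e. $\big(\sum_{i\in\Ncal}\gamma_i\big) + \big(\sum_{i\in\Ncal}\Gamma_i\big)\Uset_0 \subseteq \sum_{i\in\Ncal}(\gamma_i + \Gamma_i\Uset_0)$, and finally substitute the identity \eqref{eq:suff inner 1}, which gives $\overline{p} = \sum_{i\in\Ncal}\gamma_i$ and $P = \sum_{i\in\Ncal}\Gamma_i$, to conclude
\[
\overline{p} + P\Uset_0 \;=\; \Big(\sum_{i\in\Ncal}\gamma_i\Big) + \Big(\sum_{i\in\Ncal}\Gamma_i\Big)\Uset_0 \;\subseteq\; \sum_{i\in\Ncal}(\gamma_i + \Gamma_i\Uset_0) \;\subseteq\; \Uset ,
\]
which is precisely the asserted containment.

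Since every link in this chain is either cited (Lemma~\ref{lem:ah-polytope containment}) or already established in the surrounding text (inclusions \eqref{eq:mink_sum} and \eqref{eq:sum_ah}), there is no genuine obstacle; the only points requiring care are bookkeeping ones — matching the dimensions of $\Lambda_i \in \Rset^{4T\times 4T}$ against $m_x = m_y = 4T$ and $n_x = n_y = T$, and making the nonemptiness of $\Uset_0$ explicit so that Lemma~\ref{lem:ah-polytope containment} is legitimately applicable. It is also worth noting that the theorem is stated as a one-directional (sufficient) condition, so no converse is needed; the conservatism enters solely through the generally strict inclusion \eqref{eq:sum_ah} and through the a priori fixing of a common base set $\Uset_0$ shared by all the individual approximations.
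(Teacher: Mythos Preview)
Your proposal is correct and follows essentially the same route as the paper's own proof: apply Lemma~\ref{lem:ah-polytope containment} to each $i$ to obtain $\gamma_i + \Gamma_i\Uset_0 \subseteq \Uset_i$, sum these inclusions via \eqref{eq:mink_sum}, and then invoke \eqref{eq:sum_ah} together with \eqref{eq:suff inner 1}. The only differences are cosmetic---you spell out the dimension bookkeeping and the nonemptiness of $\Uset_0$ explicitly, which the paper leaves implicit.
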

\endgroup

The sufficient conditions  provided in Theorem \ref{thm:main inner containment} are linear with respect to the variables   $\overline{p}$, $P$,  $\gamma_i$,  $\Gamma_i$, and  $\Lambda_i$ $(i=1, \dots, N)$. As a result, the set containment constraint $\overline{p} + P \Uset_0 \subseteq \Uset$ can be conservatively approximated by a finite set of linear constraints in these variables, where the resulting number of decision variables and constraints scales polynomially with the size of the input data. 

\begin{proof} It follows from Lemma \ref{lem:ah-polytope containment} that, for each $i \in \Ncal$,  conditions \eqref{eq:suff inner 2}, \eqref{eq:suff inner 3}, and \eqref{eq:suff inner 4} are necessary and sufficient for the set inclusion $\gamma_i + \Gamma_i\Uset_0 \subseteq \Uset_i$. 
The desired result then follows from \eqref{eq:mink_sum}, \eqref{eq:sum_ah} and \eqref{eq:suff inner 1}  which together imply that
$\overline{p} + P \Uset_0 \subseteq \sum_{i \in \Ncal}  \gamma_i + \Gamma_i \Uset_0 \subseteq  \sum_{i \in \Ncal} \Uset_i =   \Uset.$
\end{proof}

With the conservative linear approximation of the containment constraint $\overline{p} + P \Uset_0 \subseteq \Uset$ provided by Theorem  \ref{thm:main inner containment} in hand,  we now turn to the problem of approximating the optimal inner polytope containment problem \eqref{eq:inner approx} by a linear program under structure-preserving affine transformations in Section \ref{sec:structure preserving innner}, and general affine transformations in Section~\ref{sec:general affine inner}.
The former will yield inner approximations of the aggregate flexibility set that have a battery representation and can be used to participate in the wholesale electricity market. The latter will yield inner approximations that are  potentially more accurate and faster to compute, but may lack a battery representation.

\begin{rem}[Heterogeneity in EV charging dynamics] \rm
    The conditions in Theorem \ref{thm:main inner containment} can be modified to accommodate  individual flexibility sets with different left-hand side matrices $H$. This allows individual EVs to have different charging dynamics parameters, such as different energy leakage coefficients and energy conversion inefficiencies, as described in Remark \ref{rem:nonideal}.
\end{rem}

\subsection{Structure-Preserving Transformations} \label{sec:structure preserving innner}
We first consider \emph{structure-preserving transformations} obtained by a translation and positive scaling of the base set: $$\Pset = \overline{p} + \alpha \Uset_0,$$
where $\alpha > 0$ denotes the scaling factor. 
Given this restriction on the family of allowable transformations, the volume of the inner approximating polytope $\Pset$ can be expressed as $\vol{\Pset} = |\det(\alpha I_T)| \vol{\Uset_0} = \alpha^T \vol{\Uset_0}. $\footnote{This  follows from the identity  $\vol{A\Xset} =  |\det(A)|\vol{\Xset}$, which gives the volume of a set $\Xset \subseteq \R^T$ under a linear transformation $A \in \Rset^{T \times T}$.}
Since the base set $\Uset_0$ is assumed to be fixed throughout the paper, maximizing the volume of $\Pset$ is equivalent to maximizing the scaling factor $\alpha$.  Using this fact in combination with the sufficient containment conditions provided by Theorem \ref{thm:main inner containment}, we arrive at the following conservative approximation of the original optimal inner polytope containment problem \eqref{eq:inner approx}:
\begin{align}
\nonumber \textrm{maximize}\ \, \quad &  \alpha \\
\nonumber \textrm{subject to } \quad &      [\, \overline{p}, \, \alpha I_T \,] = \sum\nolimits_{i=1}^N  [\, \gamma_i, \, \Gamma_i \,], \\
\nonumber & \alpha > 0, \\
\label{eq:LP inner approx}   & \Lambda_i  \geq 0, \quad i = 1, \dots, N, \\
\nonumber&\Lambda_i  H = H \Gamma_i,   \quad i = 1,\dots, N,  \\
\nonumber &\Lambda_i h_0 \leq h_i  - H \gamma_i, \quad i = 1,\dots, N.
\end{align}
Problem \eqref{eq:LP inner approx} is a linear program (LP) in the decision variables  $\overline{p}$, $\alpha$, $\gamma_i$, $\Gamma_i$, and $\Lambda_i$ $(i = 1, \dots, N)$. 

\begin{rem}[Comparison to the method in \cite{zhao2017geometric}] \rm \label{rem:struc_preserv}
The approach proposed by Zhao \emph{et al.} \cite{zhao2017geometric} entails finding a maximal inner approximation to each individual flexibility set using homothetic transformations of the given base set by solving: 
\begin{align} \label{eq:zhao_inner}
    \text{maximize} \ \,  \alpha_i  \ \, \text{subject to}  \  \  \gamma_i + \alpha_i \Uset_0 \subseteq \Uset_i 
\end{align}
for every EV $i \in \Ncal$.
The decision variables are the translation $\gamma_i \in \Rset$ and scaling $\alpha_i \in \Rset$ parameters for each homothet.
Summing the resulting homothets  yields a structure-preserving  inner approximation to the aggregate flexibility set given by $\left( \sum_{i\in\Ncal} \gamma_i \right) + \left(\sum_{i\in\Ncal} \alpha_i \right)\Uset_0$. 
The restriction to homothetic transformations in \eqref{eq:zhao_inner}  (a special case of the affine transformations proposed in this paper) can result in overly conservative approximations if the individual flexibility sets differ significantly in shape or dimension from the  base set  $\Uset_0$.  
In particular, when an individual flexibility set has lower dimension than the base set (i.e., $\dim \Uset_i < \dim \Uset_0$), the only feasible homothetic approximations are singleton sets (with a single charging profile) induced by a zero scaling factor $\alpha_i  = 0$.

Our proposed method addresses these shortcomings by optimizing over general affine transformations of the base set given by $\gamma_i + \Gamma_i \Uset_0$ ($i =1, \dots, N$), requiring only that the resulting AH-polytope $(\sum_{i \in \Ncal} \gamma_i) + (\sum_{i \in \Ncal} \Gamma_i)\Uset_0$  be structure preserving by enforcing the constraint $\sum_{i \in \Ncal} \Gamma_i = \alpha I_T$.
This enlargement of the set of structure-preserving approximations can significantly improve approximation accuracy when the individual flexibility sets are heterogeneous in shape. 
In Fig.~\ref{fig:2d-example}, we provide a two-dimensional example that  highlights some of the advantages of the approximation methods proposed in this paper in comparison to the homothet-based approximation method of Zhao \emph{et al.}  \cite{zhao2017geometric}.
By using affine transformations of the base set to approximate each of the individual flexibility sets, our general affine and structure-preserving approximation methods yield inner approximations to the aggregate flexibility set  that are supersets of the homothet-based inner approximation for this particular example.  
It is also important to note that, in general, our approximations are neither supersets nor subsets of the zonotope-based approximation of  M{\"u}ller \emph{et al.} \cite{muller2017aggregation}. We draw a more extensive comparison between these methods using realistic case studies in Section \ref{sec:experiments}.
\end{rem}

We note that although our structure-preserving inner approximation method requires the solution of an LP \eqref{eq:LP inner approx} with  $O(NT^2)$ decision variables, this LP has favorable  sparsity structure that can be exploited by decomposition methods to improve solve times. In particular, the LP \eqref{eq:LP inner approx}  consists of $N$ individual polytope containment subproblems that are coupled only through the `structure-preserving' constraint $\sum_{i \in \Ncal} \Gamma_i = \alpha I_T$ and the objective function.
This \emph{block-angular sparsity structure} is well-suited for the application of decomposition methods such as the \emph{Dantzig-Wolfe decomposition} \cite{dantzig1960decomposition}.
 
\begin{figure*}
    \centering 
    \includegraphics[width=2.0\columnwidth, trim=2.75cm 9.5cm 0 1.2cm, clip]{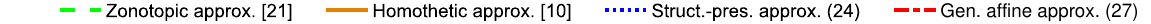}\\
    \subfloat[Individual flexibility set, $\Uset_1 \subseteq \Rset^2$]{\includegraphics[width=0.3\linewidth, trim=0.5cm 0.2cm 1cm 0.4cm, clip]{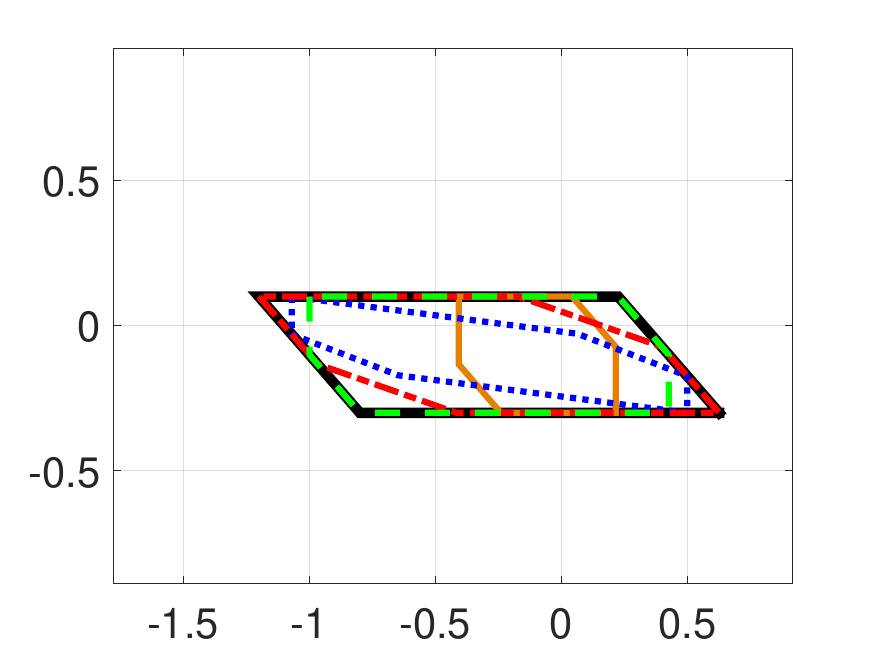}\label{fig:u1}}\quad
    \subfloat[Individual flexibility set, $\Uset_2 \subseteq \Rset^2$]{\includegraphics[width=0.3\linewidth, trim=0.5cm 0.2cm 1cm 0.4cm, clip]{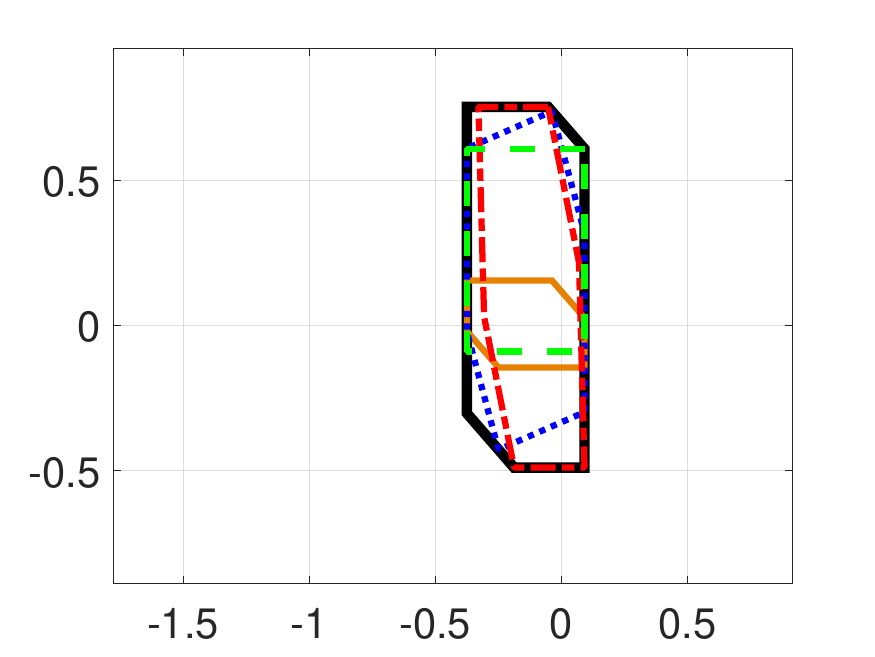}\label{fig:u2}}\quad
    \subfloat[Minkowski sum, $\Uset= \Uset_1 + \Uset_2$]{\includegraphics[width=0.3\linewidth, trim=0.5cm 0.2cm 1cm 0.4cm, clip]{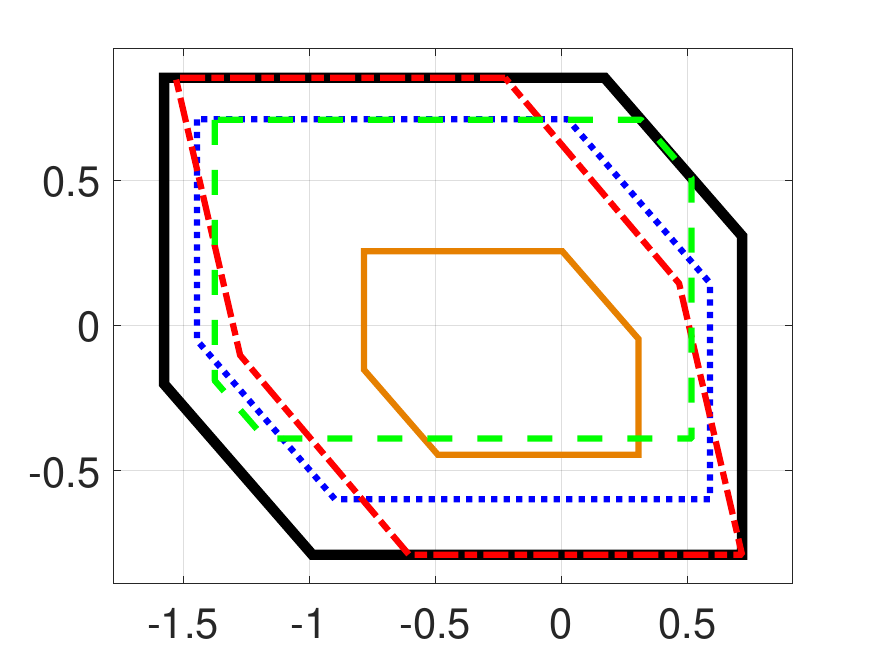}\label{fig:usum}}
    \caption{Comparison of inner approximation methods. (a), (b) Two individual flexibility sets $\Uset_i = \{x \mid Hx \leq h_i\}$ ($i= 1,2$) with randomly sampled right-hand side vectors and  (c) their sum $\Uset = \Uset_1 + \Uset_2$ are depicted as black solid lines. }
    \label{fig:2d-example}
\end{figure*}

\subsection{General Affine Transformations}  \label{sec:general affine inner}
We now show how to conservatively approximate the original optimal inner polytope containment problem \eqref{eq:inner approx} by a LP when considering more general affine transformations of the base set.  First, note that the volume of the transformation $\Pset = \overline{p} + P \Uset_0$ is given by $\vol{\Pset} = | \det (P)| \vol{\Uset_0}$.  Hence, maximizing $\vol{\Pset}$ is equivalent to maximizing $|\det (P)|$. As this function is nonconcave over the set of real square matrices, we linearize $|\det (P)|$ using a first-order Taylor expansion about the identity matrix to obtain:
\begin{align} \label{eq:linearization of det}
 \vol{\Pset} \propto |\det (P)| \approx  \trace{P} + \text{constant}. 
\end{align} 
Here, we have used the fact that, for nonsingular matrices $P$, the gradient of  $|\det (P)|$ with respect to $P$ is given by $\nabla_P |\det (P)| = |\det(P)|(P^{-1})^\top$.

Employing the  sufficient containment condition provided by Theorem \ref{thm:main inner containment} in combination  with the linear approximation of the volume objective function  in \eqref{eq:linearization of det}  leads to the following conservative approximation of the original  optimal inner polytope containment problem \eqref{eq:inner approx}:
\begin{align}
\nonumber \textrm{maximize}\ \, \quad &  \trace{P} \\
\nonumber \textrm{subject to } \quad &      [\,\overline{p}, \, P\,] = \sum\nolimits_{i=1}^N  [\,\gamma_i, \, \Gamma_i\,], \\
\label{eq:LP inner approx 2}   & \Lambda_i  \geq 0, \quad i = 1, \dots, N, \\
\nonumber&\Lambda_i  H = H \Gamma_i,   \quad i = 1,\dots, N,  \\
\nonumber &\Lambda_i h_0 \leq h_i  - H \gamma_i, \quad i = 1,\dots, N.
\end{align}
Problem \eqref{eq:LP inner approx 2} is a LP in the decision variables  $\overline{p}$, $P$, $\gamma_i$, $\Gamma_i$, and $\Lambda_i$ $(i = 1, \dots, N)$.  

Note that problem \eqref{eq:LP inner approx 2} reduces to the structure-preserving LP \eqref{eq:LP inner approx} under the additional  restriction that $P = \alpha I_T$ and $\alpha > 0$. In Section \ref{sec:experiments}, we conduct numerical experiments illustrating the improvement in approximation accuracy achievable by optimizing over the more general family of affine transformations encoded in problem \eqref{eq:LP inner approx 2}. This improvement in approximation accuracy is also illustrated in the two-dimensional example provided in Fig. \ref{fig:2d-example}, where the inner approximation produced by the LP in \eqref{eq:LP inner approx 2} is depicted in red.           

\begin{rem}[Decomposition] \rm 
It is important to note that, unlike the structure-preserving LP  \eqref{eq:LP inner approx}, problem \eqref{eq:LP inner approx 2} possesses  \emph{block-separable structure} in the variables $\gamma_i$, $\Gamma_i$, and $\Lambda_i$ $(i = 1, \dots, N)$. This structure can be exploited to decompose problem  \eqref{eq:LP inner approx 2} into $N$ separate LPs given by
\begin{align}
\nonumber \textrm{maximize}\ \, \quad &  \trace{\Gamma_i} \\
 \textrm{subject to } \quad &     
\label{eq:decomp 1}    \Lambda_i  \geq 0, \\
\nonumber&\Lambda_i  H = H \Gamma_i,   \\
\nonumber &\Lambda_i h_0 \leq h_i  - H \gamma_i,
\end{align}
for $i=1, \dots, N$. The decomposed LPs  are equivalent to
\begin{align} \label{eq:decomp 2}
    \textrm{maximize} \ \,  \trace{\Gamma_i} \ \,  \textrm{subject to} \ \ \gamma_i + \Gamma_i \Uset_0 \subseteq \Uset_i
\end{align}
for $i = 1, \dots, N$. The equivalence between problems \eqref{eq:decomp 1} and \eqref{eq:decomp 2} follows from Lemma \ref{lem:ah-polytope containment}.
Crucially, these LPs can be be solved sequentially or in parallel, producing  an optimal solution to the original problem \eqref{eq:LP inner approx 2} via the reconstruction $[\,\overline{p}, \, P\,] = \sum\nolimits_{i=1}^N  [\,\gamma_i, \, \Gamma_i\,]$.
\end{rem}

We conclude this section by reminding the reader that the proposed LP approximation \eqref{eq:LP inner approx 2} is based on a linearization of the volume objective function about the identity matrix. It may be possible to improve upon the quality of solutions generated by this approximation by using an iterative linearization-maximization method \cite{ortega2000iterative} to locally maximize the volume objective function. As another approach, one can use an objective function that more closely mirrors the downstream task where the approximation to the aggregate flexibility will be utilized.  We leave this as a direction for future work.

\section{Disaggregation Method}  \label{sec:disaggregation}
To implement a given aggregate power profile $u \in \Uset$, the aggregator must \emph{disaggregate} that profile into a collection of individual power profiles that can be feasibly executed by each EV in the given population. 
This corresponds to finding $N$ profiles $u_i \in \Uset_i$ for $i = 1, \dots, N$ such that  $u = \sum_{i \in \Ncal} u_i$. 
This can be achieved by solving a linear feasibility problem whose size grows with the number of EVs.

Alternatively,  using the class of inner approximations provided in this paper, one can avoid having to solve a LP for disaggregation. 
Specifically, the computation of an inner approximation according to the conditions in Theorem~\ref{thm:main inner containment} yields, as a byproduct, an affine mapping that can transform any feasible point in the inner approximation into a collection of individually feasible points.   

To better understand this approach to disaggregation, let $\Pset = \overline{p} + P \Uset_0 \subseteq \Uset$ denote an inner approximation to the aggregate flexibility set satisfying  conditions \eqref{eq:suff inner 1}-\eqref{eq:suff inner 4} in Theorem \ref{thm:main inner containment}, and let $u \in \Pset$  denote an arbitrary point in this set.  It follows from \eqref{eq:suff inner 1} that there exists a point $u_0 \in \Uset_0$  such that the given point $u$ can be expressed as
\begin{align}
    \label{eq:disagg} u = \overline{p} + P u_0 =  \sum_{i \in \Ncal} \gamma_i + \Gamma_i u_0.
\end{align}
Additionally, it follows from conditions \eqref{eq:suff inner 2}-\eqref{eq:suff inner 4} that $\gamma_i + \Gamma_i\Uset_0 \subseteq \Uset_i$ for all $i \in \Ncal$ (a direct consequence of Lemma~\ref{lem:ah-polytope containment}). This implies that  $\gamma_i + \Gamma_i u_0 \in \Uset_i $ for all $i \in \Ncal$. Thus, the given point $u \in \Pset$ can be disaggregated into a collection of individually feasible points given by
\begin{align} \label{eq:disagg elements}
  u_i := \gamma_i + \Gamma_i u_0
\end{align}
for $ i = 1, \dots, N$. If the matrix $P$ is also invertible, then the disaggregated power profiles in \eqref{eq:disagg elements} can be expressed as  explicit functions of the aggregate power profile $u$ as follows: 
\begin{align} \label{eq:disagg elem 2}
  u_i := \gamma_i + \Gamma_i P^{-1}(u - \overline{p})
\end{align}
for $i = 1, \dots, N$. 

\section{Case Studies} \label{sec:experiments}
We compare the inner approximation methods proposed in this paper with competing methods in the literature by examining two practical applications of bidirectional EV charging: (i) peak power minimization and (ii) electricity cost minimization. 

The \emph{peak power minimization problem} is defined as 
\begin{align}\label{eq:power_min}
    \text{minimize} \ \, \| u + \ell \|_\infty  \ \, \text{subject to}  \  \, u  \in \Uset,
\end{align}
where $\Uset \subseteq \Rset^T$ denotes the aggregate flexibility set associated with a given set of participating EVs, and $\ell \in \Rset^T$  (kW) denotes the aggregate load profile associated with a given set of households. We refer to $u + \ell$ as the net-load profile. When there is a positive (negative) net load at time $t$, it means that energy is being drawn from (fed into) the grid during that specific time period. 

The \emph{electricity cost minimization problem} is defined as 
\begin{align}\label{eq:cost_min}
    \text{minimize} \ \,  (p^\top u) \delta  \ \, \text{subject to}  \  \, u  \in \Uset,
\end{align}
where $p \in \Rset^T$ (\$/kWh) denotes a given sequence of energy prices. In addition to fulfilling the charging requirements of participating EVs, an aggregator equipped with bidirectional charging capabilities can also exploit price arbitrage opportunities by buying energy from the grid when prices are low and selling it back when prices are higher.

In both of these problem formulations, we have made a number of simplifying assumptions. First, we assume that the aggregator has perfect knowledge of the aggregate household load profile $\ell$ and the price profile $p$ at the outset (but these  could be forecasted in practice). Second, we assume that the energy prices are unaffected by the  buy/sell actions of the aggregator. This is a reasonable assumption for aggregators  that are not large enough to exert market power.

We will use problems \eqref{eq:power_min}  and \eqref{eq:cost_min} as the basis for numerical experiments designed to compare the effectiveness of the structure-preserving and general affine approximation methods proposed in this paper, with the homothet-based approximation methods of Zhao \emph{et al.} \cite{zhao2017geometric}, and the zonotope-based approximation methods of M{\"u}ller \emph{et al.} \cite{muller2017aggregation}.
For a more comprehensive survey and comparison of inner approximation methods for aggregate flexibility sets, we refer the reader to  \cite{ozturk2022aggregation}.

\begin{table}[t!]
    \centering
    \begingroup
    \setlength{\tabcolsep}{6pt} 
    \renewcommand{\arraystretch}{1.4} 
    \begin{tabular}{clc}
        \hline
        \hline
        \textbf{Param.} & \textbf{Description} & \textbf{Value/Range}  \\
        \hline   
        $\delta$ & Time period length  &   1 hr \\
        $T$ & Time horizon & 18 \\
        $N$ & Number of EVs & 25 \\
        $a_i$ & Plug-in time period &   0 ($3\text{ PM}$ arrival) \\
        $d_i$ &  Deadline time period & 17 ($9 \text{ AM}$ departure) \\
        $x_i^{\rm max}$ & Battery capacity  & $[25,50]$ kWh \\
        $u_i^{\rm max}$ & Max charging rate  & $[3,10]$ kW \\
        $u_i^{\rm min}$ & Min charging rate  & $[-10,-3]$ kW \\
        $x_i^{\rm init}$ & Initial state-of-charge  & $[0, 0.4 x_i^{\rm max} ] $ kWh  \\
        $x_i^{\rm fin}$ & Final state-of-charge  & $[0.6 x_i^{\rm max} , x_i^{\rm max}] $ kWh  \\ \hline \hline
    \end{tabular}
    \endgroup
    \caption{Summary of EV charging parameters used in~experiments.  The parameters are either fixed at~the specified value or uniformly distributed random variables over the specified interval.  We associate the initial time period $t=0$ with the 3:00-4:00 PM time interval.}
    \label{tab:data_gen}
\end{table}

\subsection{Data Description}
The experiments are carried out using historical load and energy price data. The load data,  obtained from the Pecan Street Dataport \cite{street2016dataport}, consists of electricity consumption profiles (excluding solar power production) for 25 individual households in Tompkins County, New York. The load data spans a six-month period between May 1, 2019 and October 31, 2019. We add up the individual household load profiles to obtain an aggregate load profile $\ell$ for each day in the given data set (184 days in total). For the price data, we utilized historical day-ahead (DA) energy prices from the NYISO Central Zone, making sure to align the dates and times of the sampled DA energy prices with the given load data.

The EV charging data used in our numerical experiments are simulated to reflect typical overnight  charging requirements.
Table \ref{tab:data_gen} summarizes the EV charging parameters used along with their specific values or the intervals from which they are randomly sampled.
All of the random variables are assumed to be mutually independent.
Using the simulated EV arrival/departure times, energy requirements, and charging constraints, we construct individual flexibility sets using the procedure outlined in Example \ref{ex:ev_req}.

As discussed in Remark \ref{rem:struc_preserv}, the homothet-based approximation method of Zhao \emph{et al.} \cite{zhao2017geometric} requires that $\dim \Uset_i \geq \dim \Uset_0$ for all $i \in \Ncal$. 
In the context of the EV charging model considered in our experiments, this corresponds to a requirement that all EVs have identical arrival and departure times. 
Thus, although the approximation methods proposed in this paper can accommodate EVs with heterogeneous arrival and departure times, we assume that $a_i = 0$ and $d_i = T-1$ for all $i \in \Ncal$ to facilitate a comparison with \cite{zhao2017geometric}.

\subsection{Experiments Description}

For each day in the six-month period under consideration, we randomly sample a finite collection of individual EV flexibility sets according to the parameters specified in Table~\ref{tab:data_gen}. Using the sampled individual flexibility sets,  we compute an inner approximation of the corresponding aggregate flexibility set using each of the following methods:  
\begin{enumerate}[(i)]
    \item General affine approximation [this paper],
    \item Structure-preserving approximation [this paper],
    \item Homothet-based approximation \cite{zhao2017geometric},
    \item Zonotope-based approximation \cite{muller2017aggregation}.\footnote{The zonotope-based approximations are computed using MATLAB code provided by M{\"u}ller \emph{et al.} \cite{muller2017aggregation}.}
\end{enumerate}
Using each of the resulting inner approximations, we solve the peak power minimization problem \eqref{eq:power_min} and the electricity cost minimization problem \eqref{eq:cost_min} (replacing the true aggregate flexibility set $\Uset$ with the corresponding inner approximation). As a benchmark for comparison, we also solve the optimization problems  \eqref{eq:power_min} and \eqref{eq:cost_min} using the true aggregate flexibility set.

To assess the performance of each approximation method (i)-(iv), we compute the differences between the suboptimal values obtained by solving the inner approximations to problems \eqref{eq:power_min} and \eqref{eq:cost_min} and the optimal values obtained using the true aggregate flexibility set. By repeating these calculations for every day in the given data set, we obtain suboptimality gap distributions for each approximation method (i)-(iv), as applied to both problems \eqref{eq:power_min} and \eqref{eq:cost_min}. The resulting suboptimality gap distributions are reported in Figure \ref{fig:subopt_gap}.

\subsection{Results and Discussion}

\begin{figure}
    \centering
    \subfloat[Peak power suboptimality gap distributions]{\includegraphics[width=0.85\columnwidth]{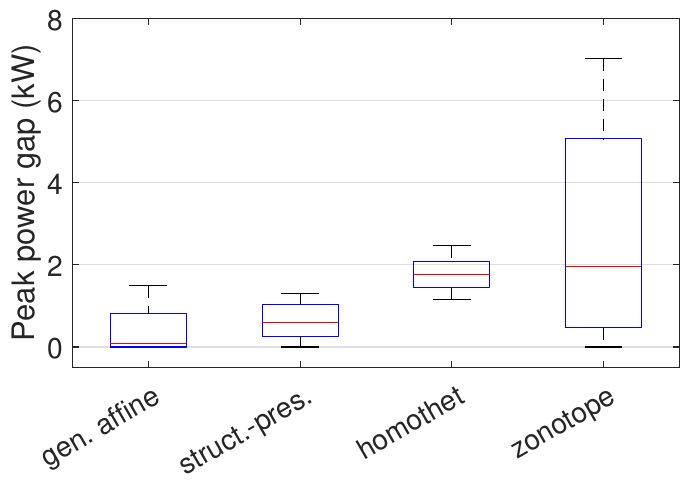}\label{fig:power_gap}}\\
    \vspace{15pt}
    \subfloat[Electricity cost suboptimality gap distributions]{\includegraphics[width=0.85\columnwidth]{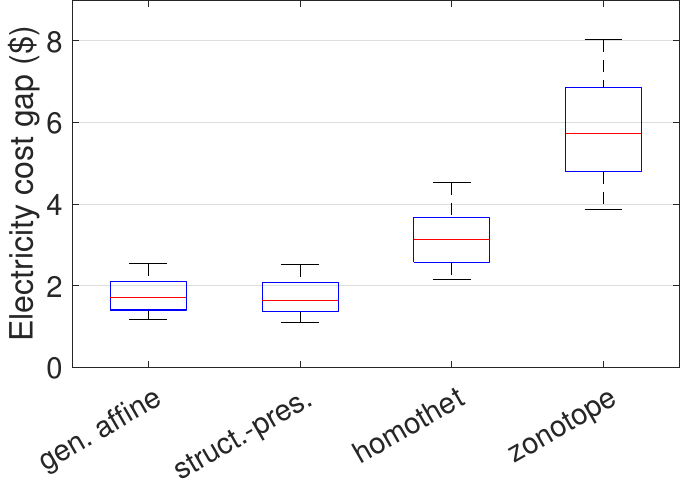}\label{fig:cost_gap}}
    \caption{Suboptimality gap distributions associated with each approximation method, as applied to the peak power minimization \eqref{eq:power_min} and electricity cost minimization \eqref{eq:cost_min} problems.
    The whiskers delimit the interdecile range, the box delimits the interquartile range, and the red line represents the median of each distribution.}
    \label{fig:subopt_gap}
\end{figure}

\begin{figure}[htb!]
    \centering
    \subfloat[Aggregate net-energy constraints]{
    \includegraphics[width=\columnwidth,trim= 0 4.9cm 0 0, clip]{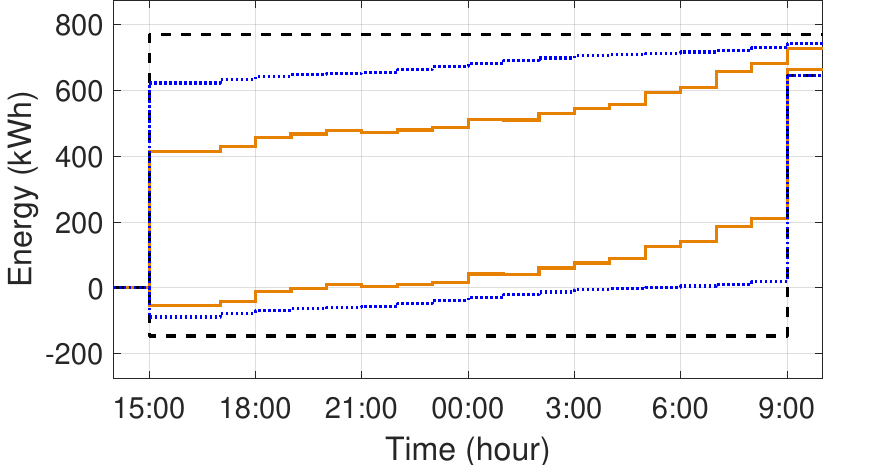}
    \label{fig:set_E}}\\
    \subfloat[Aggregate power constraints]{
    \includegraphics[width=\columnwidth,trim= 0 0.2cm 0 4.3cm, clip]{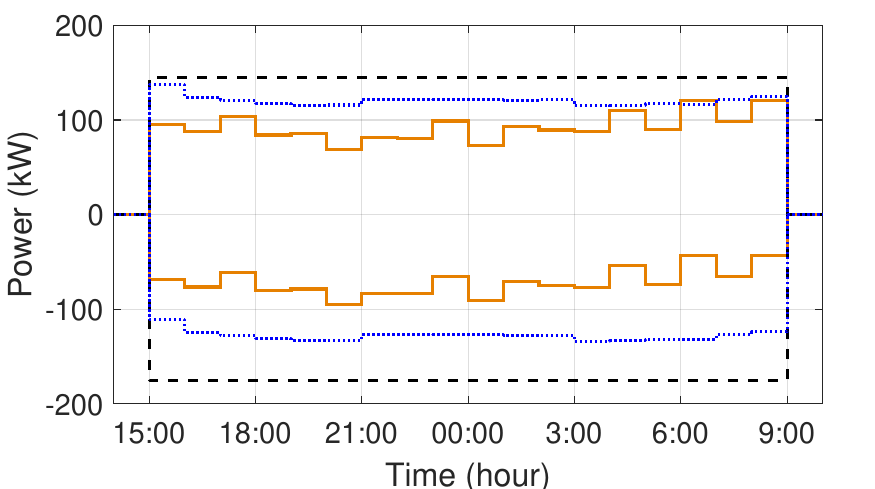}
    \label{fig:set_P}}\\
    \caption{Inner approximations of the aggregate flexibility set computed from data sampled on May 3, 2019. Depicted are the homothet-based inner approximation (solid orange lines), the structure-preserving inner approximation (dotted blue lines),  and the outer approximation $N\Uset_0$ (dashed black lines).}
    \label{fig:agg_set_approx}
\end{figure}

\begin{figure}[htb!]
    \centering
    \subfloat[Aggregate net-energy profiles]{
    \includegraphics[width=\columnwidth,trim= 0 4.9cm 0 0, clip]{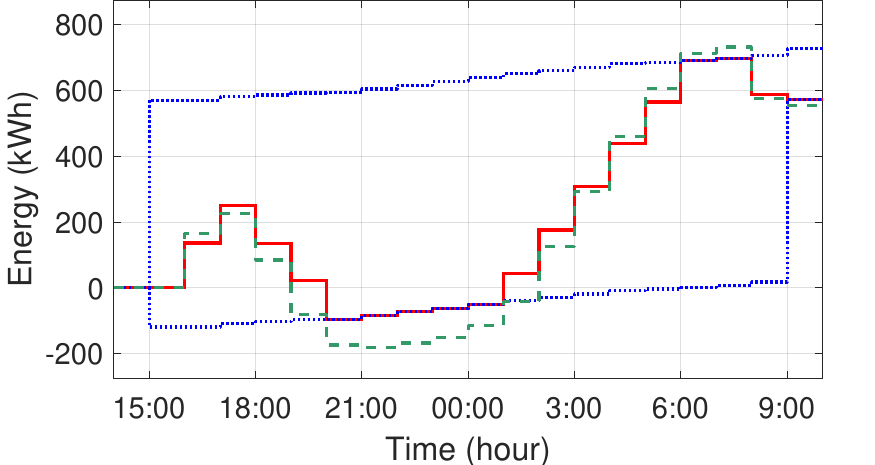}
    \label{fig:cost_E}}\\
    \subfloat[Aggregate power profiles]{
    \includegraphics[width=\columnwidth,trim= 0 0.2cm 0 4.3cm, clip]{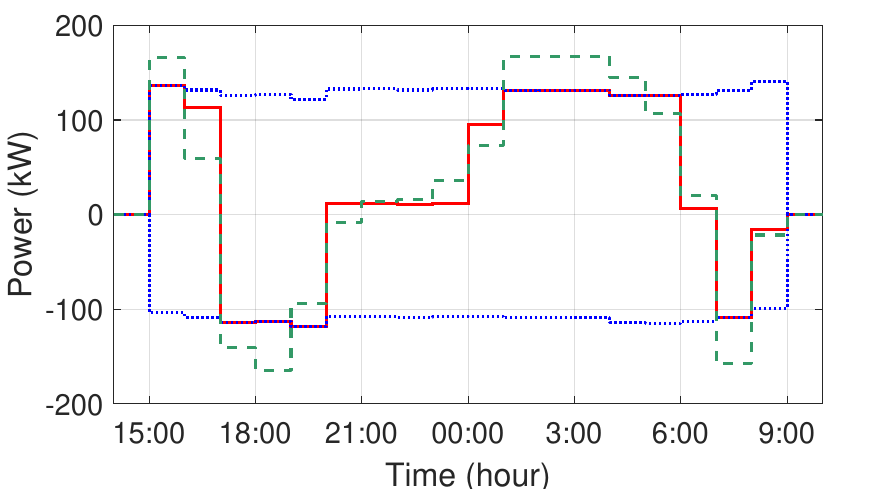}
    \label{fig:cost_P}}\\
     \hspace*{0.077in}\subfloat[Energy price profile]{\includegraphics[width=\columnwidth,trim= 0 0.2cm 0 4.3cm, clip]{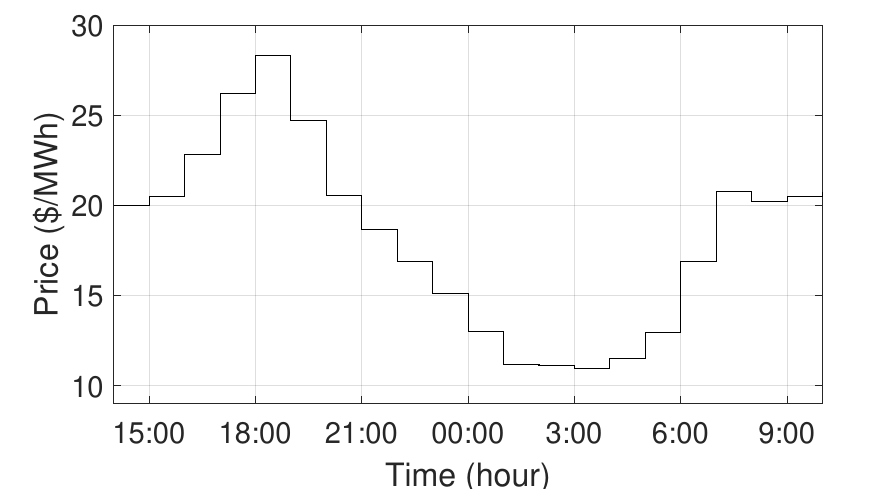}\label{fig:cost}}
    \caption{Control of EV aggregation to minimize electricity cost for data sampled on Oct. 31, 2019. (a), (b) Depicted are the optimal energy and power trajectories solving \eqref{eq:cost_min} (dashed green lines), the suboptimal energy and power trajectories  based on the structure-preserving approximation (solid red lines), and the energy and power limits associated with the structure-preserving approximation (dotted blue lines). (c) The NYISO Central Zone day-ahead energy prices for Oct. 31, 2019.}
    \label{fig:cost_soln}
\end{figure}

\begin{figure}
    \centering
    \includegraphics[width=.9\columnwidth]{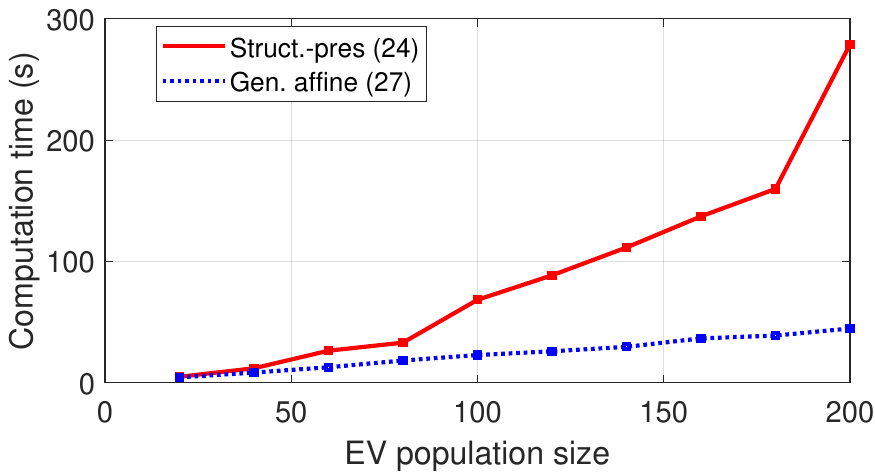}
    \caption{Inner approximation computation times (averaged over five independent trials) versus EV population size $N$ for the structure-preserving and general affine inner approximation methods.}
    \label{fig:compTimes}
\end{figure}

Figure \ref{fig:subopt_gap} shows that the general affine approximation method consistently outperforms the other three methods in the peak power minimization problem, achieving suboptimality gaps that are nearly zero for over half of the days in  the six-month span of the data set.
When applied to the cost minimization problem, we observe that the structure-preserving and general affine approximation methods outperform the homothet-based and zonotope-based approximation methods for a large majority of the days.
Compared to the other three methods studied, the zonotope-based approximation method exhibits considerably more day-to-day variability in performance, as shown by the larger variance in its suboptimality gap distributions.
This may stem from an incompatibility between the centrally-symmetric geometry of zonotopes and the asymmetric geometry of the individual flexibility sets.

In Figure \ref{fig:agg_set_approx}, we plot the power and net-energy limits associated with the homothet-based inner approximation, the structure-preserving inner approximation, and the outer approximation $N\Uset_0$, based on data from a randomly selected day.
For these data, the structure-preserving approximation provides significantly more flexibility than the homothet-based approximation. That being said, there is a nontrivial gap between the power and net-energy limits of structure-preserving approximation and those of the outer approximation. This may be indicative of conservatism in the structure-preserving method, suggesting that while it improves upon the homothet-based method, it may not fully capture the true aggregate flexibility set.

For another arbitrarily selected day, we depict in Figure \ref{fig:cost_soln} the boundaries associated with the structure-preserving inner approximation (dotted blue lines), along with the corresponding power and net-energy profiles (solid red lines) obtained when solving the electricity cost minimization problem \eqref{eq:cost_min} with this inner approximation.
As one might expect, we observe cycles of charging and discharging, which serve to capitalize on multiple inter-temporal price arbitrage opportunities over the course of the day.
The optimal power and net-energy profiles (dashed green lines) follow the same trend, but exceed the boundaries associated with the structure-preserving approximation.
This indicates that there is, in fact, some conservatism associated with the structure-preserving inner approximation.

We close by examining the behavior of computation times associated with the approximation methods proposed in this paper as a function of the EV population size $N$.
We initially sample 20 individual flexibility sets and  increase the population size incrementally, sampling 20 additional individual flexibility sets at each step, ranging from $N=20$ to $N=200$ sets.
The flexibility sets are sampled using the parameters described in Table \ref{tab:data_gen}.
For each value of $N$, we compute inner approximations to the aggregate flexibility set using 
both the structure-preserving and general affine inner approximation methods proposed in this paper.
For the structure-preserving inner approximation method, we record the time required to solve the LP in \eqref{eq:LP inner approx}.
For the general affine inner approximation method, we record the time required to solve the sequence of LPs in \eqref{eq:decomp 1}, which entails solving one LP per individual flexibility set $i \in \Ncal$. 
The LPs were solved using CVX (version 2.2) in MATLAB \cite{cvx}, using the MOSEK solver (version 9.1.9). 
A laptop with an AMD Ryzen 7 4700U processor and 16 GB of RAM was used for all computations.

In Fig. \ref{fig:compTimes}, we plot the resulting computation times (averaged over five independent trials) as a function of the population size $N$ for each method. The computation time required by the structure-preserving approximation method appears to scale super-linearly with $N$.
It may be possible to reduce these solve times by utilizing the Dantzig-Wolfe decomposition algorithm to take advantage of the block-angular sparsity structure in the linear program \eqref{eq:LP inner approx}.
In contrast, the computation time for the general affine approximation method scales linearly with the number of EVs, as one might expect.
It should also be noted that the general affine approximation method is trivially parallelizable, which could further decrease computation times by a factor of $N$. This ability to parallelize can significantly enhance the method's scalability to much larger EV populations.

\section{Conclusion} \label{sec:conclusion}

    In this paper, we presented novel linear programming-based methods to compute inner approximations of the Minkowski sum of heterogeneous EV flexibility sets. By restricting the class of approximating sets to those which can be expressed as affine transformations of a given convex polytope (termed the base set), we showed how to conservatively approximate the resulting inner optimization problems as linear programs that scale polynomially with the number and dimension of the individual flexibility sets. The proposed approximation methods were shown to generalize and improve upon the approximation accuracy of related methods in the literature. 
    We also provided an efficient disaggregation method to decompose any aggregate charging profile within the proposed inner approximations into a collection of individually feasible charging profiles, without requiring the solution of another optimization problem to perform the disaggregation.
    
    As a direction for future research, we  intend to generalize the methods developed in this paper to account for lossy EV charging dynamics with energy leakage and energy conversion inefficiencies.
    It would also be interesting to extend these approximation techniques to incorporate distribution network capacity constraints  that impact how EVs can be aggregated across a large network. In such settings, a hierarchical approach to aggregation may prove effective in handling localized constraints at different levels of the distribution network.

\appendices

\section{Nomenclature} \label{app:nomenclature}
\renewcommand{\nomname}{}

\printnomenclature

\bibliographystyle{IEEEtran}
\bibliography{references}{\markboth{References}{References}}

\end{document}